\documentclass{IEEEtaes}

\usepackage{color,array,amsthm}
\usepackage{graphicx}
\usepackage{amsmath}
\usepackage{amssymb}
\usepackage{booktabs}
\usepackage{multirow}
\usepackage{algorithmic}
\usepackage{algorithm}

\jvol{XX}
\jnum{XX}
\jmonth{XXXXX}
\paper{1234567}
\pubyear{2020}
\doiinfo{TAES.2020.Doi Number}

\DeclareMathOperator*{\argmin}{arg\,min}
\newtheorem{theorem}{Theorem}

\begin{document}

\title{Decentralized Bi-Level Optimization for Cooperative Orbital Transportation of a Rigid Payload}

\author{Sidhdharth D. Sikka}
\affil{Purdue University, West Lafayette, IN, 47907}

\author{Revathy Ravuri}
\affil{University of Manchester, Manchester, United Kingdom, M13 9PL}

\author{Hei Yin Jowett Chan}
\affil{Manifold Research Group, San Francisco, CA, 94110}

\author{Tim Echtermeyer}
\affil{University of Manchester, Manchester, United Kingdom, M13 9PL}

\author{Shaoshuai Mou}
\affil{Purdue University, West Lafayette, IN, 47907}

\receiveddate{}

\corresp{{\itshape (Corresponding author: Sidhdharth D. Sikka)}}

\authoraddress{Sidhdharth D. Sikka is with Purdue University, West Lafayette, IN 47907 USA, and also with the Manifold Research Group, San Francisco, CA 94110 USA (e-mail: \href{mailto:sikkas@purdue.edu}{sikkas@purdue.edu}). Revathy Ravuri is with the University of Manchester, Manchester M13 9PL, United Kingdom, and also with the Manifold Research Group, San Francisco, CA 94110 USA. Jowett Chan is with the Manifold Research Group, San Francisco, CA 94110 USA. Tim Echtermeyer is with the University of Manchester, Manchester M13 9PL, United Kingdom. Shaoshuai Mou is with Purdue University, West Lafayette, IN 47907 USA.}

\markboth{SIKKA ET AL.}{DECENTRALIZED BI-LEVEL OPTIMIZATION FOR COOPERATIVE ORBITAL TRANSPORTATION}
\maketitle

\begin{abstract}The rapid growth of the space economy, alongside increasing orbital congestion and debris, is driving demand for reliable on-orbit transportation of large and non-cooperative objects. Performing such transportation with a swarm of small spacecraft distributes risk and removes the single point of failure inherent to monolithic architectures, but requires solving a non-convex trajectory optimization coupling every agent's cone-constrained thrust to the payload's rotational dynamics. Existing approaches span decentralized control laws with asymptotic guarantees but limited constraint enforcement, and centralized and partially centralized planners that reintroduce a single point of failure. In this work, we present a decentralized bi-level framework driven by exact gradients: the problem is split into an inner convex thrust allocation and an outer search over payload torque trajectories. This decomposition yields fuel-efficient solutions quickly, without requiring full NLP convergence, and at short timescales its performance scales with the number of agents. Theoretical convergence guarantees are established, and the method is validated through extensive Monte Carlo simulation, including closed-loop recovery from in-flight faults. These properties make the framework well suited to spacecraft swarm applications.
\end{abstract}

\begin{IEEEkeywords}Bi-level optimization, cooperative transportation, decentralized optimization, multi-agent systems, spacecraft swarms, trajectory optimization.
\end{IEEEkeywords}

\section{INTRODUCTION}
The rapid growth of the space economy has significantly increased demand for reliable and efficient on-orbit transportation solutions, to support both commercial and scientific applications. At the same time, increasing orbital congestion and the proliferation of large debris pose growing risks to sustainable space operations. Addressing these challenges requires transportation technologies capable of safely and efficiently altering the orbit of non-cooperative objects, whose tumbling motion and the limited thrust of servicing spacecraft already make even close-proximity operations demanding \cite{wang_safe_2025}. 

Monolithic mission architectures, with a single spacecraft performing the entire manoeuvre, are costly, risk prone, and lack redundancy. Multi-agent architectures offer a promising alternative, distributing the risk across multiple spacecraft, thereby also increasing robustness and operational flexibility through coordinated action.

While cooperative object transportation with multiple agents has been widely studied across terrestrial and aerial domains \cite{farivarnejad_multirobot_2022,leitner_multi-robot_2009,stuber_lets_2020,abeshtan_robust_2024}, including decentralized learning-based \cite{zhang_decentralized_2020} and model-predictive \cite{muhammed_realtime_2026} controllers for multi-robot payload transport, its application in orbital settings is comparatively limited. Notable approaches include decentralized PD control \cite{farivarnejad_fully_2021}, occlusion-based heuristics \cite{chen_occlusion-based_2015}, back-stepping control for space manipulators \cite{sucan_advantages_2011}, and cooperative multi-satellite detumbling of tumbling targets through designed torque control laws \cite{xu_cooperative_2025}. Formation-based methods have also been proposed, such as those by Meng et al. \cite{meng_decentralized_2018}, re-framing cooperative transport as a formation tracking problem. However, these methods typically offer asymptotic guarantees with limited constraint enforcement and little attention to fuel efficiency or convergence time; factors that are critical to real-time orbital operations.  

Partially centralized architectures have been explored for cooperative transportation tasks in orbit. In particular, mothership-style systems, featuring centralized planning and distributed actuation, have been proposed for missions such as asteroid exploration \cite{vance_evaluation_2020}, telescope assembly \cite{underwood_using_2015}, and orbital debris mitigation \cite{udrea_cooperative_2015}. Most recently, cooperative transportation of passive free-flyers by pushing chaser spacecraft has been demonstrated under human-in-the-loop velocity commands, with stability and safety enforced by barrier conditions and thrust allocated by a centralized mixed-integer program \cite{marchesini_multiagent_2026}. While effective in many scenarios, these architectures concentrate planning or allocation in a single node and therefore remain vulnerable to single-point failures that can jeopardize overall mission success.

Fully decentralized architectures remove this vulnerability by distributing computation and decision-making across the agents, and are increasingly seen as essential as space operations scale up \cite{asri_introductory_2024,adams_swarm_2026}. A mature body of distributed control delivers decentralized transport with force distribution and fault response \cite{meng_decentralized_2018} and fault-tolerant formation flying under actuator loss \cite{ma_adaptive_2023,santos_reconfigurable_2025,azadmanesh_decoupling_2026}. However, these methods guarantee asymptotic tracking or synchronization, treats thrust-direction cones and terminal boundary conditions only approximately, and optimize no mission objective such as fuel. The nearest decentralized \emph{optimizer} \cite{chang_decentralized_2026} uses duality to handle collision constraints between free-flying vehicles, not an objective coupled through a shared payload. Decentralized trajectory optimization with hard constraint satisfaction therefore remains open.

This paper addresses this problem of cooperative orbital transportation involving a rigid body and $N$ spacecraft agents that must collaboratively manoeuvre the object from an initial state to a desired final state while minimizing total fuel. Each agent is affixed to a fixed location on the payload’s surface and can exert thrust within a constrained angular range.

A constrained optimization framework is well-suited to this problem, as it naturally incorporates relative orbital dynamics, attitude dynamics, actuation constraints, and trajectory boundary conditions. Optimization-based planning has been applied to spacecraft swarms through sequential convex programming \cite{morgan_model_2014}, reachable-set collision constraints \cite{cui_trajectory_2024}, and hierarchical planning over dynamic communication topologies \cite{liu_hierarchical_2026}, but in centralized or hierarchical settings and for free-flying agents rather than agents rigidly coupled through a shared payload. However, the problem is inherently non-convex due to the coupling between agents’ control inputs and the rotational dynamics of the rigid body. Previous work \cite{sikka_orbit_2025} addressed this challenge using a centralized bi-level optimization approach where the overall problem was split into an inner convex problem that solved for optimal control input trajectory given a torque profile, while the outer non-convex optimisation problem used population-based heuristics to optimize the torque trajectory. This method is still susceptible to a single-point failure due to the centralised approach taken. 

In this paper, we extend the previous approach to a decentralized setting, taking advantage of the multi-agent architecture and thereby removing the single point of failure. Rather than treating the outer torque search as a purely heuristic problem, we observe that the optimal value of the inner convex problem is differentiable in the torque trajectory, and that its exact gradient is available at relatively low computational cost once the inner solver has converged. The outer problem is then solved by projected gradient descent with random restarts, and decentralization acquires a precise role: each agent runs an independent restart stream from its own random seeds, so a swarm of $N$ agents draws $N$ times as many restarts per unit wall-clock and can therefore reach basins of attraction a single node would miss, and a single max-consensus round over a line-of-sight communication graph delivers the swarm-wide best trajectory to every agent. The same decentralized search is embedded in a closed-loop recovery mode that detects in-flight actuation and communication faults, detumbles the payload, identifies and removes failed agents, and replans with the survivors.

The key contributions of this paper are:
\begin{enumerate}
  \item a first-order method for the outer problem of the bi-level formulation introduced in~\cite{sikka_orbit_2025}, using the exact envelope-theorem gradient of the inner optimal value;
  \item a decentralized multi-start architecture whose solution quality improves with the number of agents; and
  \item a closed-loop fault-recovery mode, made possible by the decentralization of the architecture.
\end{enumerate}

The remainder of the paper is organised as follows. Section~\ref{sec:formulation} describes the problem of interest, its bi-level reformulation, and the equivalence of the two. Section~\ref{sec:proposed} presents the proposed method and establishes its convergence guarantee. Section~\ref{sec:experiments} evaluates the method through Monte Carlo testing against sampling-based and direct-NLP baselines across cone half-angles, compute budgets, and fleet sizes, including closed-loop fault-survivability statistics, and Section~\ref{sec:results} reports the results. Section~\ref{sec:conclusion} discusses implications and future work.

\begin{figure}
    \centering
    \includegraphics[width=\linewidth]{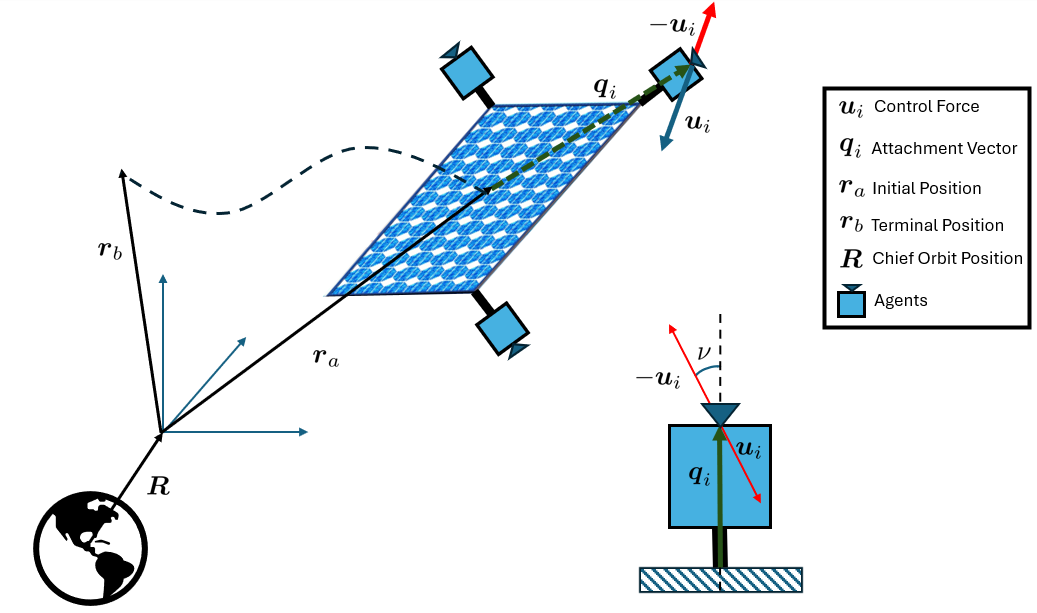}
    \caption{Agents Attached to Rigid Payload. Agents are depicted as blue squares. On the left of the figure, the center of the payload has initial position in the relative orbital frame denoted by $\boldsymbol{r}_a$ and terminal position in the relative orbital frame denoted by $\boldsymbol{r}_b$. The attachment vector $\boldsymbol{q}_i$ represents the vector pointing from the center of the payload to the agent. The Maximum Thrust Angle Constraint is depicted on the right of the figure, the control thrust applied by each agent must be aligned with the inward attachment direction $-\boldsymbol{q}_i$ within an acceptable angle $\nu$, so the agent pushes against the payload and its exhaust plume points away from it.}
    \label{fig:mainaim}
\end{figure}

\section{PROBLEM FORMULATION}
\label{sec:formulation}

\subsection{State, Input, and System Dynamics}

The system is formulated within the Hill frame, a relative orbital frame centered on a reference chief orbit. The payload rigid-body state is
\begin{equation}
  \boldsymbol{x}_p(t) = \mathrm{col}\{\boldsymbol{r}(t),\,\boldsymbol{v}(t),\,\boldsymbol{\phi}(t),\,\boldsymbol{\omega}(t)\}
  \in \mathbb{R}^9 \times \mathfrak{so}(3),
\end{equation}
where $\boldsymbol{r},\boldsymbol{v}\in\mathbb{R}^3$ are the relative position and velocity, $\boldsymbol{\phi}\in\mathbb{R}^3$ is the rotation vector parametrising attitude on the Lie algebra $\mathfrak{so}(3)$, and $\boldsymbol{\omega}\in\mathbb{R}^3$ is the angular velocity. Attitude is carried as a rotation matrix $R\in SO(3)$ and updated via the exponential map to eliminate norm drift. Each agent $i\in\{1,\ldots,N\}$ is rigidly mounted at a body-frame attachment point $\boldsymbol{q}_i^{\mathcal{B}}\in\mathbb{R}^3$; the Hill-frame attachment vector is $\boldsymbol{q}_i(t)=R(t)\,\boldsymbol{q}_i^{\mathcal{B}}$. The full state is
\begin{equation}
  \boldsymbol{x}(t) = \mathrm{col}\{\boldsymbol{x}_p(t),\,\boldsymbol{q}_1(t),\ldots,\boldsymbol{q}_N(t)\}
  \in \mathbb{R}^{9+3N} \times SO(3),
\end{equation}
and the stacked control input is $\boldsymbol{u}(t)=\mathrm{col}\{\boldsymbol{u}_1(t),\ldots,\boldsymbol{u}_N(t)\}\in\mathbb{R}^{3N}$, where $\boldsymbol{u}_i\in\mathbb{R}^3$ is the thrust vector of agent $i$.

The continuous-time dynamics are
\begin{equation}
  \dot{\boldsymbol{x}}(t) = \mathbf{A}(\boldsymbol{\omega},t)\boldsymbol{x}(t)
  + \mathbf{B}(\boldsymbol{q}_{1:N})\boldsymbol{u}(t),
\end{equation}
where $\mathbf{A}$ contains the linearised relative orbital dynamics (the time-varying Tschauner--Hempel equations, valid for an eccentric reference orbit) and the Euler rotational equations, and $\mathbf{B}$ maps individual thrust vectors to translational accelerations and body torques via the skew-symmetric operator $[\cdot]_\times$. The discrete-time update at step size $\Delta$ is
\begin{subequations}
\label{eq:discrete}
\begin{align}
  \begin{bmatrix}\boldsymbol{r}_{k+1}\\\boldsymbol{v}_{k+1}\end{bmatrix}
  &= \begin{bmatrix}\boldsymbol{r}_k\\\boldsymbol{v}_k\end{bmatrix}
  + \Delta\!\left(\boldsymbol{\Psi}(t_k)\begin{bmatrix}\boldsymbol{r}_k\\\boldsymbol{v}_k\end{bmatrix}
  + \frac{1}{m}\begin{bmatrix}\mathbf{0}\\\mathbf{I}_3\end{bmatrix}
  \sum_{i=1}^{N}\boldsymbol{u}_{i,k}\right),
  \label{eq:disc_trans}\\
  R_{k+1} &= R_k\cdot\exp\!\bigl(\Delta[\boldsymbol{\omega}_k]_\times\bigr),\quad
  R_0 = \exp([\boldsymbol{\phi}_0]_\times),
  \label{eq:disc_att}\\
  \boldsymbol{\omega}_{k+1} &= \boldsymbol{\omega}_k
  + \Delta\bigl(-\mathbf{J}^{-1}[\mathbf{J}\boldsymbol{\omega}_k]_\times\boldsymbol{\omega}_k
  + \mathbf{J}^{-1}\boldsymbol{\tau}_k\bigr),
  \label{eq:disc_omega}\\
  \boldsymbol{q}_{i,k} &= R_k\,\boldsymbol{q}_i^{\mathcal{B}},\quad\forall\,i.
  \label{eq:disc_q}
\end{align}
\end{subequations}
Equations~\eqref{eq:disc_trans}--\eqref{eq:disc_q} define the discrete-time map $\boldsymbol{x}_{k+1}=f_d(\boldsymbol{x}_k,\boldsymbol{u}_k)$. Equation~\eqref{eq:disc_att} propagates attitude directly on $SO(3)$, with the rotation vector recovered as $\boldsymbol{\phi}_k=\log(R_k)^\vee$; boundary attitudes are specified as rotation vectors $\boldsymbol{\phi}_a,\boldsymbol{\phi}_b\in\mathfrak{so}(3)$. The inertia tensor $\mathbf{J}$ is treated as constant since the payload mass dominates that of the agents.

\subsection{Optimal Control Problem}

Each agent is constrained to thrust within a cone of half-angle $\nu$ centred on its inward attachment direction $-\boldsymbol{q}_{i,k}$, so that thrust presses against the payload while the exhaust plume points away from it. Let $\mathbf{U}=\{\boldsymbol{u}_{i,k}\}\in\mathbb{R}^{3NT}$ denote the full thrust sequence. The problem of interest is to find the minimum-fuel thrust sequence
\begin{subequations}
\label{eq:ocp}
\begin{align}
  \argmin_{\mathbf{U}}&\;J(\mathbf{U})
  = \sum_{i=1}^{N}\sum_{k=0}^{T-1}\|\boldsymbol{u}_{i,k}\|
  \label{eq:ocp_obj}\\
  \text{s.t.}&\;\boldsymbol{x}_{k+1} = f_d(\boldsymbol{x}_k,\boldsymbol{u}_k),
  \label{eq:ocp_dyn}\\
  &\;\frac{-\boldsymbol{u}_{i,k}^\top\boldsymbol{q}_{i,k}}
        {\|\boldsymbol{u}_{i,k}\|\|\boldsymbol{q}_{i,k}\|}\geq\cos\nu,
        \quad\forall\,i,k,
  \label{eq:ocp_cone}\\
  &\;\boldsymbol{x}_0=\boldsymbol{a},\quad\boldsymbol{x}_T=\boldsymbol{b}.
  \label{eq:ocp_bc}
\end{align}
\end{subequations}
Problem~\eqref{eq:ocp} is non-convex due to the nonlinear attitude dynamics and the thrust-cone constraint~\eqref{eq:ocp_cone}.

\subsection{Bi-Level Reformulation}
\label{sec:reform}

The control input is decomposed by separating the net force and net torque:
\begin{equation}
  \mathbf{B}\boldsymbol{u}(t) = \tilde{\mathbf{B}}
  \begin{bmatrix}\displaystyle\sum_{i=1}^{N}\boldsymbol{u}_i(t)\\[4pt]\boldsymbol{\tau}(t)\end{bmatrix},
  \quad
  \boldsymbol{\tau}(t)=\sum_{i=1}^{N}[\boldsymbol{q}_i(t)]_\times\boldsymbol{u}_i(t),
\end{equation}
where $\tilde{\mathbf{B}}$ maps net force to translational acceleration and net torque through $\mathbf{J}^{-1}$ to angular acceleration. Once the initial condition and a torque trajectory $\boldsymbol{\tau}_{0:T-1}$ are fixed, Eqs.~\eqref{eq:disc_att}--\eqref{eq:disc_q} propagate all rotational quantities $\{R_k,\boldsymbol{\omega}_k,\boldsymbol{q}_{i,k}\}$ independently of the translational states. The remaining optimisation over $\mathbf{U}$ and $\{\boldsymbol{r}_k,\boldsymbol{v}_k\}$ is then convex. This motivates the bi-level decomposition:

\paragraph{Inner convex problem.} Given fixed $\boldsymbol{\tau}_{0:T-1}$:
\begin{subequations}
\label{eq:inner}
\begin{align}
  \argmin_{\mathbf{U}}&\;\sum_{i}\sum_{k}\|\boldsymbol{u}_{i,k}\|
  \label{eq:inner_obj}\\
  \text{s.t.}\;\;&\text{translational dynamics}~\eqref{eq:disc_trans},
  \label{eq:inner_dyn}\\
  &\boldsymbol{\tau}_k = \sum_i[\boldsymbol{q}_{i,k}]_\times\boldsymbol{u}_{i,k},
  \label{eq:inner_torque}\\
  &\frac{-\boldsymbol{u}_{i,k}^\top\boldsymbol{q}_{i,k}}
       {\|\boldsymbol{u}_{i,k}\|\|\boldsymbol{q}_{i,k}\|}\geq\cos\nu,
  \label{eq:inner_cone}\\
  &\boldsymbol{r}_0=\boldsymbol{r}_a,\;\boldsymbol{r}_T=\boldsymbol{r}_b,\;
   \boldsymbol{v}_0=\boldsymbol{v}_a,\;\boldsymbol{v}_T=\boldsymbol{v}_b.
  \label{eq:inner_bc}
\end{align}
\end{subequations}
The torque trajectory enters only through the constraint~\eqref{eq:inner_torque}. The solution $\xi(\boldsymbol{\tau}_{0:T-1})\in\mathbb{R}^{3NT}$ is the minimum-fuel thrust sequence consistent with the given torque trajectory.

\paragraph{Outer search problem.} Find the torque trajectory minimising the inner cost:
\begin{subequations}
\label{eq:outer}
\begin{align}
  \argmin_{\boldsymbol{\tau}_{0:T-1}}&\;L(\boldsymbol{\tau}_{0:T-1})
  = J\!\left(\xi(\boldsymbol{\tau}_{0:T-1})\right)
  \label{eq:outer_obj}\\
  \text{s.t.}\;\;
  &R_{k+1} = R_k\cdot\exp\!\bigl(\Delta[\boldsymbol{\omega}_k]_\times\bigr),
  \label{eq:outer_dyn_att}\\
  &\boldsymbol{\omega}_{k+1} = \boldsymbol{\omega}_k
  + \Delta\bigl(-\mathbf{J}^{-1}[\mathbf{J}\boldsymbol{\omega}_k]_\times\boldsymbol{\omega}_k
  + \mathbf{J}^{-1}\boldsymbol{\tau}_k\bigr),
  \label{eq:outer_dyn_omega}\\
  &R_0=\exp([\boldsymbol{\phi}_a]_\times),\quad R_T=\exp([\boldsymbol{\phi}_b]_\times),
  \label{eq:outer_att}\\
  &\boldsymbol{\omega}_0=\boldsymbol{\omega}_a,\quad
   \boldsymbol{\omega}_T=\boldsymbol{\omega}_b.
  \label{eq:outer_omega}
\end{align}
\end{subequations}

\subsection{Equivalence of the Bi-Level Reformulation}
\label{sec:equiv}

We now establish that the bi-level reformulation achieves the same optimal value as the original problem~\eqref{eq:ocp}. Let $\mathcal{F}$ denote the feasible set of~\eqref{eq:ocp}, that is, all thrust sequences satisfying~\eqref{eq:ocp_dyn}--\eqref{eq:ocp_bc}, and let $J^\star\triangleq\min_{\mathbf{U}\in\mathcal{F}}J(\mathbf{U})$ be its optimal value. Define the induced torque trajectory of a control sequence $\mathbf{U}$ as
\begin{equation}
  \boldsymbol{\tau}_k(\mathbf{U}) \triangleq \sum_{i=1}^{N}[\boldsymbol{q}_{i,k}]_\times\boldsymbol{u}_{i,k},
  \quad k=0,\ldots,T-1,
\end{equation}
and the set of admissible torque trajectories satisfying the rotational boundary conditions as
\begin{equation}
  \mathcal{T} \triangleq \bigl\{\boldsymbol{\tau}_{0:T-1} :
  \boldsymbol{\phi}_0=\boldsymbol{\phi}_a,\;
  \boldsymbol{\phi}_T=\boldsymbol{\phi}_b,\;
  \boldsymbol{\omega}_0=\boldsymbol{\omega}_a,\;
  \boldsymbol{\omega}_T=\boldsymbol{\omega}_b \bigr\},
\end{equation}
where the boundary values are evaluated after propagating~\eqref{eq:disc_att}--\eqref{eq:disc_omega} under $\boldsymbol{\tau}_{0:T-1}$. For each $\boldsymbol{\tau}\in\mathcal{T}$, the inner feasible set is
\begin{align}
  \mathcal{U}(\boldsymbol{\tau}) \triangleq \Bigl\{\mathbf{U} :\;&
  \text{translational dynamics~\eqref{eq:disc_trans}},\notag\\
  &\boldsymbol{\tau}_k = \textstyle\sum_i[\boldsymbol{q}_{i,k}]_\times\boldsymbol{u}_{i,k},\quad\forall k,\notag\\
  &\tfrac{-\boldsymbol{u}_{i,k}^\top\boldsymbol{q}_{i,k}}{\|\boldsymbol{u}_{i,k}\|\|\boldsymbol{q}_{i,k}\|}\geq\cos\nu,\quad\forall i,k,\notag\\
  &\boldsymbol{r}_0=\boldsymbol{r}_a,\;\boldsymbol{r}_T=\boldsymbol{r}_b,\;
   \boldsymbol{v}_0=\boldsymbol{v}_a,\;\boldsymbol{v}_T=\boldsymbol{v}_b\Bigr\}.
\end{align}
The outer objective of~\eqref{eq:outer_obj} is then the value function of the inner problem, and the bi-level optimal value is
\begin{equation}
  L(\boldsymbol{\tau})=\min_{\mathbf{U}\in\mathcal{U}(\boldsymbol{\tau})}J(\mathbf{U}),
  \qquad
  \widetilde{J}^\star\triangleq\min_{\boldsymbol{\tau}\in\mathcal{T}}L(\boldsymbol{\tau}).
\end{equation}

\textit{Claim:} $\widetilde{J}^\star = J^\star$.

\textit{Proof ($\widetilde{J}^\star \leq J^\star$).} Let $\mathbf{U}\in\mathcal{F}$ be any feasible solution of~\eqref{eq:ocp} and set $\boldsymbol{\tau}=\boldsymbol{\tau}(\mathbf{U})$. Since $\mathbf{U}$ satisfies the full dynamics of~\eqref{eq:ocp}, the induced torque trajectory satisfies all rotational boundary conditions, so $\boldsymbol{\tau}\in\mathcal{T}$. By construction, $\mathbf{U}$ also satisfies all constraints in $\mathcal{U}(\boldsymbol{\tau})$, hence $\mathbf{U}\in\mathcal{U}(\boldsymbol{\tau})$ and $L(\boldsymbol{\tau})\leq J(\mathbf{U})$. Taking the minimum over all $\mathbf{U}\in\mathcal{F}$:
\begin{equation}
  \widetilde{J}^\star = \min_{\boldsymbol{\tau}\in\mathcal{T}}L(\boldsymbol{\tau}) \leq J^\star.
\end{equation}

\textit{Proof ($J^\star \leq \widetilde{J}^\star$).} Let $\boldsymbol{\tau}\in\mathcal{T}$ and $\mathbf{U}\in\mathcal{U}(\boldsymbol{\tau})$. Since $\boldsymbol{\tau}\in\mathcal{T}$, propagating~\eqref{eq:disc_att}--\eqref{eq:disc_omega} yields rotational trajectories satisfying $\boldsymbol{\phi}_0=\boldsymbol{\phi}_a$, $\boldsymbol{\phi}_T=\boldsymbol{\phi}_b$, $\boldsymbol{\omega}_0=\boldsymbol{\omega}_a$, $\boldsymbol{\omega}_T=\boldsymbol{\omega}_b$. Since $\mathbf{U}\in\mathcal{U}(\boldsymbol{\tau})$, the translational boundary conditions hold, the cone constraints are satisfied, and the torque consistency $\boldsymbol{\tau}_k=\sum_i[\boldsymbol{q}_{i,k}]_\times\boldsymbol{u}_{i,k}$ holds at every step. Together, $(\mathbf{U},\boldsymbol{\tau})$ satisfies all constraints of~\eqref{eq:ocp}, so $\mathbf{U}\in\mathcal{F}$ and $J^\star\leq J(\mathbf{U})$. Taking the minimum over $\mathbf{U}\in\mathcal{U}(\boldsymbol{\tau})$ then over $\boldsymbol{\tau}\in\mathcal{T}$:
\begin{equation}
  J^\star \leq \min_{\boldsymbol{\tau}\in\mathcal{T}}L(\boldsymbol{\tau}) = \widetilde{J}^\star.
\end{equation}

Combining both directions gives $J^\star=\widetilde{J}^\star$. Furthermore, any optimizer $\mathbf{U}^\star$ of~\eqref{eq:ocp} induces an outer optimizer $\boldsymbol{\tau}^\star=\boldsymbol{\tau}(\mathbf{U}^\star)$, and conversely any outer optimizer $\boldsymbol{\tau}^\star$ paired with its inner minimizer $\mathbf{U}^\star\in\arg\min_{\mathbf{U}\in\mathcal{U}(\boldsymbol{\tau}^\star)}J(\mathbf{U})$ is an optimizer of~\eqref{eq:ocp}.

\textit{Remark 1 (principal chart).} The rotation-vector parametrisation $\boldsymbol{\phi}=\log(R)^\vee$ is used on its principal chart, $\|\boldsymbol{\phi}\|\leq\pi$. The equivalence above is stated on $SO(3)$ and is unaffected by this choice, but the search operators of Section~\ref{sec:proposed} parametrise boundary attitudes on the principal chart and therefore explore the principal homotopy class of rotations: candidate trajectories that reach the terminal attitude ``the long way around'' are admissible in $\mathcal{T}$ but are not generated by the implemented projection. All optimality statements about the implemented method are accordingly made with respect to this reachable subset of $\mathcal{T}$.

\section{PROPOSED APPROACH}
\label{sec:proposed}

The outer problem~\eqref{eq:outer} is non-convex, but it is not unstructured: its objective $L(\boldsymbol{\tau})$ is the optimal value of a convex program whose data depend smoothly on $\boldsymbol{\tau}$. We exploit this by descending the exact gradient of $L$, obtained from the envelope theorem, and by handling the multimodality of $L$ through random restarts. Decentralization then has a precise function: $N$ agents run $N$ independent restart streams in parallel and agree on the best result by max-consensus. Algorithm~\ref{alg:island} summarizes the loop each agent executes. Fig.~\ref{fig:methodology} summarizes the complete architecture, including the closed-loop recovery path of Section~\ref{sec:recovery}.

\begin{figure}
    \centering
    \includegraphics[width=0.95\linewidth]{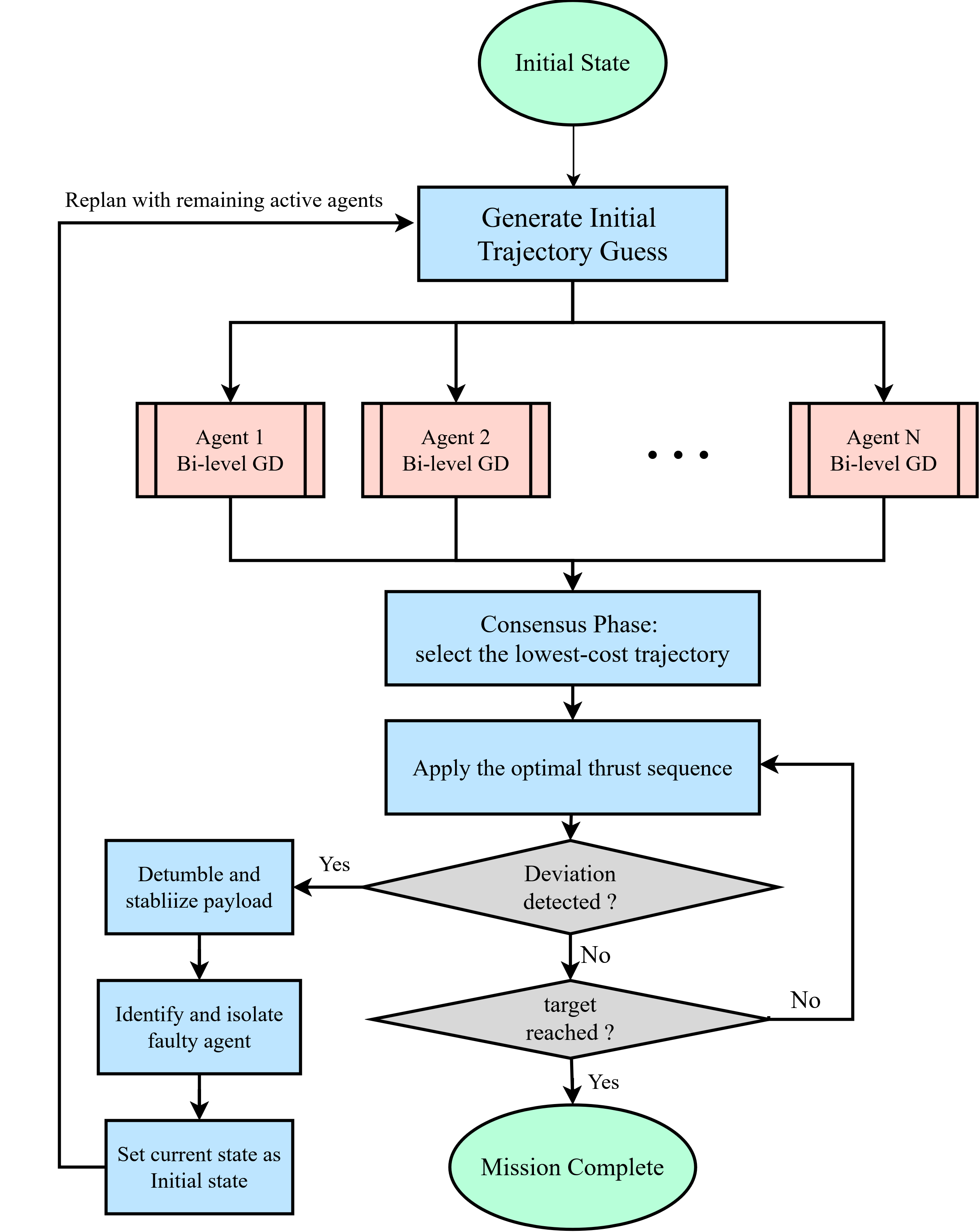}
    \caption{Methodology overview. From a shared initial trajectory guess, each agent runs an independent bi-level gradient-descent island (Sections~\ref{sec:envelope}--\ref{sec:multistart}); a max-consensus round selects the lowest-cost trajectory (Section~\ref{sec:consensus}), which the swarm executes. Detected deviations trigger the recovery loop of Section~\ref{sec:recovery}: detumble, isolate faulty agents, and replan with the survivors from the current state.}
    \label{fig:methodology}
\end{figure}

\subsection{Envelope-Theorem Gradient of the Inner Value}
\label{sec:envelope}

Write the inner problem~\eqref{eq:inner} in the generic form
\begin{equation}
  L(\boldsymbol{\tau})=\min_{\mathbf{U}}\; J(\mathbf{U})
  \quad\text{s.t.}\quad
  \boldsymbol{g}(\mathbf{U};\boldsymbol{\tau})=\mathbf{0},\;\;
  \boldsymbol{h}(\mathbf{U};\boldsymbol{\tau})\leq\mathbf{0},
  \label{eq:inner_generic}
\end{equation}
The torque trajectory enters the constraint functions through two channels: directly, in the torque-allocation equalities~\eqref{eq:inner_torque}, and indirectly, through the rotation matrices $R_k(\boldsymbol{\tau})$ obtained by propagating the attitude recursion~\eqref{eq:disc_att}--\eqref{eq:disc_omega}, which appear in the attachment vectors $\boldsymbol{q}_{i,k}=R_k(\boldsymbol{\tau})\boldsymbol{q}_i^{\mathcal{B}}$ of both the allocation equalities and the cone constraints. Let $\mathcal{L}(\mathbf{U},\boldsymbol{\lambda},\boldsymbol{\mu};\boldsymbol{\tau})=J(\mathbf{U})+\boldsymbol{\lambda}^{\top}\boldsymbol{g}+\boldsymbol{\mu}^{\top}\boldsymbol{h}$ denote the Lagrangian, and let $(\mathbf{U}^\star,\boldsymbol{\lambda}^\star,\boldsymbol{\mu}^\star)$ be a primal-dual solution of~\eqref{eq:inner_generic}. Under standard regularity conditions (LICQ and strict complementarity at $\mathbf{U}^\star$), the value function is differentiable and the envelope theorem~\cite{danskin_theory_1967} gives
\begin{equation}
  \nabla_{\boldsymbol{\tau}} L(\boldsymbol{\tau})
  = \nabla_{\boldsymbol{\tau}}\,
    \mathcal{L}(\mathbf{U}^\star,\boldsymbol{\lambda}^\star,\boldsymbol{\mu}^\star;\boldsymbol{\tau}).
  \label{eq:envelope}
\end{equation}
The minimizer's sensitivity $\partial\mathbf{U}^\star/\partial\boldsymbol{\tau}$ is not required: only the partial derivative of the Lagrangian at the converged point. In particular, no implicit differentiation through the solver's KKT system is performed. Since $\boldsymbol{\tau}\in\mathbb{R}^{3T}$, the gradient~\eqref{eq:envelope} has dimension $3T$: its size, and the cost of evaluating it, scale with the number of time steps and are independent of the number of agents.

In implementation, the inner problem is solved exactly as a second-order cone program (SOCP): the fuel objective enters in epigraph form, $\|\boldsymbol{u}_{i,k}\|\leq t_{i,k}$ with linear objective $\sum_{i,k}t_{i,k}$, the thrust cones~\eqref{eq:inner_cone} are exact second-order cone constraints, the translational dynamics are linear, and the rotation matrices $R_k(\boldsymbol{\tau})$ enter as numeric data obtained by rolling out the attitude recursion at the current $\boldsymbol{\tau}$. A small quadratic regulariser $\epsilon_r\sum_{i,k}\|\boldsymbol{u}_{i,k}\|^2$ with $\epsilon_r=10^{-3}$ is added because the pure fuel objective can leave the allocation non-unique, as several allocations of identical total fuel can realise the same torque, which destabilises the duals; the regulariser selects a unique minimiser. The SOCP is modelled in CVXPY~\cite{diamond_cvxpy_2016} and solved to certified optimality by the interior-point conic solver Clarabel~\cite{goulart_clarabel_2024}.

The gradient~\eqref{eq:envelope} is assembled from the two channels identified above. The direct channel, through the torque-allocation equalities~\eqref{eq:inner_torque}, is read off the solver at no cost, the partial derivative of the Lagrangian with respect to $\boldsymbol{\tau}_k$ is the negative of the corresponding equality multipliers. The indirect channel, through the rotation matrices $R_k(\boldsymbol{\tau})$ appearing in the attachment vectors, is evaluated by a single backward adjoint pass over the attitude recursion~\eqref{eq:disc_att}--\eqref{eq:disc_omega} using the right Jacobian of the $SO(3)$ exponential map. The pass has length $T$ and cost independent of $N$. The attitude adjoint was verified against algorithmic differentiation on a smooth surrogate value function to a relative accuracy of $10^{-7}$, and the assembled gradient against central finite differences of the SOCP value to a relative directional accuracy of $10^{-3}$, the noise floor set by the conic solver's tolerance.

\subsection{Feasibility Projection}
\label{sec:proj}

Candidate torque trajectories are not guaranteed to satisfy the rotational boundary conditions \eqref{eq:outer_att}--\eqref{eq:outer_omega}. Each candidate $\boldsymbol{\tau}_{0:T-1}$ is projected onto the feasible set by solving
\begin{subequations}
\label{eq:proj}
\begin{align}
  \min_{\hat{\boldsymbol{\tau}}_{0:T-1}}&\;
  \sum_{k=0}^{T-1}\|\hat{\boldsymbol{\tau}}_k-\boldsymbol{\tau}_k\|^2
  \quad\text{s.t.}\quad\eqref{eq:outer_dyn_att},\;\eqref{eq:outer_dyn_omega},
  \label{eq:proj_obj}\\
  &\;\boldsymbol{\phi}_0=\boldsymbol{\phi}_a,\;
   \boldsymbol{\phi}_T=\boldsymbol{\phi}_b,\\
  &\;\boldsymbol{\omega}_0=\boldsymbol{\omega}_a,\;
   \boldsymbol{\omega}_T=\boldsymbol{\omega}_b.
\end{align}
\end{subequations}
Although~\eqref{eq:proj} is non-convex, it is small-dimensional and solved quickly via a multiple-shooting NLP warm-started from $\boldsymbol{\tau}_{0:T-1}$, ensuring every candidate evaluated by the outer search is rotationally feasible. When the projection NLP fails to converge, which occurs with increasing frequency as the cone half-angle $\nu$ shrinks, the candidate is rejected rather than evaluated unprojected; the consequences of this for the reachable search space are made explicit in Section~\ref{sec:convergence}.

\subsection{Projected Gradient Descent}
\label{sec:pgd}

Because the outer feasible set $\mathcal{T}$ (rotational boundary conditions) is enforced by the projection~\eqref{eq:proj} rather than analytically, the descent is a projected gradient method on the composition $L\circ\Pi$, where $\Pi$ denotes the projection; the projection's derivative is neglected, as is standard in projected gradient schemes. Each iteration takes a trial step against the normalised gradient and re-projects:
\begin{equation}
  \boldsymbol{\tau}^{+}=\Pi\!\left(\boldsymbol{\tau}
  -\alpha\,\frac{\nabla_{\boldsymbol{\tau}}L(\boldsymbol{\tau})}{\|\nabla_{\boldsymbol{\tau}}L(\boldsymbol{\tau})\|}\right).
  \label{eq:pgd_step}
\end{equation}
The step length is managed by a backtracking line search on the true objective (each trial requires one projection and one inner solve): the initial step is chosen \emph{relative} to the torque scale, $\alpha_0=\eta\,\mathrm{RMS}(\boldsymbol{\tau})\sqrt{3T}$ with $\eta=0.03$, so that the first move displaces the trajectory by approximately $3\%$ of its root-mean-square magnitude; successful steps grow $\alpha$ by a factor $1.5$, failed steps shrink it by $0.5$, and the descent declares first-order stationarity when no improving step above a minimum relative length exists. The relative scaling is essential for transfer across scenarios whose torque magnitudes span several orders of magnitude.

\subsection{Random Restarts and Decentralized Multistart}
\label{sec:multistart}

The value function $L$ is multimodal in general; empirically the number of separated basins grows as the cone half-angle $\nu$ tightens. A single descent therefore only certifies a local minimum. Upon reaching stationarity, the method restarts from a random cold torque trajectory,
\begin{equation}
  \boldsymbol{\tau}^{(j)}_{0}\sim\mathcal{N}\!\left(\mathbf{0},\,
  s^2\mathbf{I}_{3T}\right),\qquad
  s=c\,\mathrm{RMS}(\boldsymbol{\tau}_{\mathrm{best}}),
  \label{eq:restart}
\end{equation}
with $c=1$, and retains the best pair $(\boldsymbol{\tau},L)$ found across all restarts. The very first descent is instead warm-started from a multiple-shooting nominal, which reliably lands in a good basin; note that re-solving the shooting problem does \emph{not} constitute a restart, as its outcome is deterministic, so genuine exploration requires the cold draws of~\eqref{eq:restart}.

In the decentralized variant, each agent $i\in\{1,\ldots,N\}$ is an independent \emph{island} executing the same descent-with-restarts loop within its own wall-clock budget. Island $1$ warm-starts from the shooting nominal; islands $2,\ldots,N$ initialise from independent random draws~\eqref{eq:restart} with per-island seeds, and all islands continue with their own random restarts as budget allows. No communication occurs during the search. The swarm thus evaluates $N$ restart streams per unit of per-agent wall-clock.

\begin{algorithm}[t]
\caption{Per-Agent Bi-Level Descent (island $i$)}
\label{alg:island}
\begin{algorithmic}[1]
\STATE \textbf{Input:} boundary conditions $(\boldsymbol{a},\boldsymbol{b})$; seed $s_i$; per-agent budget $t_{\max}$
\STATE $(\boldsymbol{\tau}^{\star}_{i},\,L^{\star}_{i}) \gets (\varnothing,\,+\infty)$
\STATE $\boldsymbol{\tau}_{0} \gets \texttt{shooting\_nominal()}$ \textbf{if} $i=1$ \textbf{else} draw~\eqref{eq:restart} (Sec.~\ref{sec:multistart})
\STATE \textbf{while} $\texttt{wall\_clock()} < t_{\max}$:
\STATE \hspace{1em} $\boldsymbol{\tau} \gets \Pi(\boldsymbol{\tau}_{0})$ by~\eqref{eq:proj} (Sec.~\ref{sec:proj})
\STATE \hspace{1em} \textbf{if} $\boldsymbol{\tau} = \varnothing$: $\boldsymbol{\tau}_{0} \gets$ draw~\eqref{eq:restart}; \textbf{continue}
\STATE \hspace{1em} $(\mathbf{U}^{\star},\boldsymbol{\lambda}^{\star},\boldsymbol{\mu}^{\star},\,L) \gets \texttt{solve\_inner}(\boldsymbol{\tau})$ by~\eqref{eq:inner} (Sec.~\ref{sec:envelope})
\STATE \hspace{1em} \textbf{while not} stationary \textbf{and} $\texttt{wall\_clock()} < t_{\max}$:
\STATE \hspace{2em} $\boldsymbol{g} \gets \nabla_{\boldsymbol{\tau}}L$ at $(\mathbf{U}^{\star},\boldsymbol{\lambda}^{\star},\boldsymbol{\mu}^{\star})$ by~\eqref{eq:envelope} (Sec.~\ref{sec:envelope})
\STATE \hspace{2em} $\alpha \gets \texttt{line\_search}(\boldsymbol{\tau},\boldsymbol{g})$ (Sec.~\ref{sec:pgd})
\STATE \hspace{2em} $\boldsymbol{\tau} \gets \Pi\bigl(\boldsymbol{\tau}-\alpha\,\boldsymbol{g}/\lVert\boldsymbol{g}\rVert\bigr)$ by~\eqref{eq:pgd_step}
\STATE \hspace{2em} $(\mathbf{U}^{\star},\boldsymbol{\lambda}^{\star},\boldsymbol{\mu}^{\star},\,L) \gets \texttt{solve\_inner}(\boldsymbol{\tau})$
\STATE \hspace{1em} \textbf{if} $L < L^{\star}_{i}$: $(\boldsymbol{\tau}^{\star}_{i},\,L^{\star}_{i}) \gets (\boldsymbol{\tau},\,L)$
\STATE \hspace{1em} $\boldsymbol{\tau}_{0} \gets$ draw~\eqref{eq:restart}
\STATE \textbf{return} $(\boldsymbol{\tau}^{\star}_{i},\,L^{\star}_{i})$ to max-consensus (Sec.~\ref{sec:consensus})
\end{algorithmic}
\end{algorithm}

\subsection{Convergence of the Random Multistart}
\label{sec:convergence}

Let $D$ denote the restart distribution~\eqref{eq:restart} on $\mathbb{R}^{3T}$ and let $A:\mathbb{R}^{3T}\to\mathcal{T}\cup\{\varnothing\}$ be the deterministic map taking a draw $\boldsymbol{\tau}_0$ to the stationary point returned by the projected descent of Section~\ref{sec:pgd}, with $A(\boldsymbol{\tau}_0)=\varnothing$ when the projection fails. Define the induced cost $\mathcal{C}(\boldsymbol{\tau}_0)=L(A(\boldsymbol{\tau}_0))$ with $\mathcal{C}=+\infty$ on failures, and the essential infimum
\begin{equation}
  L^{\ddagger}\;\triangleq\;\inf\bigl\{\,y\in\mathbb{R}\;:\;
  \mathbb{P}_{D}\bigl(\mathcal{C}(\boldsymbol{\tau}_0)\leq y\bigr)>0 \,\bigr\}.
\end{equation}

\begin{theorem}[Almost-sure convergence]
\label{thm:multistart}
Let $\boldsymbol{\tau}^{(1)}_0,\boldsymbol{\tau}^{(2)}_0,\ldots$ be i.i.d.\ draws from $D$ and let $M_n=\min\limits_{j\leq n}\mathcal{C}(\boldsymbol{\tau}^{(j)}_0)$. Then $M_n\to L^{\ddagger}$ almost surely as $n\to\infty$.
\end{theorem}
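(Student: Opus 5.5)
The plan is to use the standard argument for the minimum of i.i.d.\ random variables: $M_n$ is monotone, it is bounded below by $L^{\ddagger}$ almost surely, and it eventually falls below every level strictly above $L^{\ddagger}$. Write $C_j=\mathcal{C}(\boldsymbol{\tau}^{(j)}_0)$. Because $A$ is deterministic and the draws are i.i.d., the $C_j$ are i.i.d.\ extended-real random variables taking values in $\mathbb{R}\cup\{+\infty\}$. The statement tacitly assumes that $\mathcal{C}$ is measurable. I will record this as a standing assumption. It is reasonable because $A$ is assembled from finitely many solver and projection operations, together with a stopping rule, so it is a limit of measurable maps. We also need $L^{\ddagger}<+\infty$, meaning the projection succeeds with positive probability. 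If instead $L^{\ddagger}=+\infty$, then $C_j=+\infty$ almost surely and $M_n\equiv+\infty=L^{\ddagger}$ trivially. Since $L\geq 0$, we also have $L^{\ddagger}\geq 0$ and it is finite in the nontrivial case.

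\emph{Step 1 (lower bound).} I first show $\mathbb{P}(C_1<L^{\ddagger})=0$. For any $y<L^{\ddagger}$, the definition of the essential infimum gives $\mathbb{P}(C_1\leq y)=0$. Taking a countable union over $y=L^{\ddagger}-1/m$ yields $\mathbb{P}(C_1<L^{\ddagger})=0$. Countable subadditivity over $j$ then shows that almost surely $C_j\geq L^{\ddagger}$ for all $j$, and hence $M_n\geq L^{\ddagger}$ for all $n$.

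\emph{Step 2 (upper bound).} Fix $\varepsilon>0$ and set $p_\varepsilon=\mathbb{P}(C_1\leq L^{\ddagger}+\varepsilon)$. The definition of $L^{\ddagger}$ as an infimum gives some $y<L^{\ddagger}+\varepsilon$ with $\mathbb{P}(C_1\leq y)>0$, so $p_\varepsilon>0$. By independence,
\begin{equation*}
\mathbb{P}\bigl(M_n>L^{\ddagger}+\varepsilon\bigr)=(1-p_\varepsilon)^n.
\end{equation*}
Since $M_n$ is nonincreasing in $n$, the event $\{M_n>L^{\ddagger}+\varepsilon\ \forall n\}$ is the decreasing intersection of these events, and so has probability $\lim_n(1-p_\varepsilon)^n=0$. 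Borel--Cantelli, using $\sum_n(1-p_\varepsilon)^n<\infty$, gives the same conclusion. Thus almost surely there exists $n_\varepsilon$ with $M_n\leq L^{\ddagger}+\varepsilon$ for all $n\geq n_\varepsilon$, where monotonicity is used again.

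\emph{Step 3 (combine).} Intersect the almost-sure events from Step 2 over $\varepsilon=1/m$, $m\in\mathbb{N}$, together with the event from Step 1. This is a countable intersection of probability-one events. On it, $L^{\ddagger}\leq M_n\leq L^{\ddagger}+1/m$ eventually, for every $m$, so $M_n\to L^{\ddagger}$.

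I expect the probabilistic content to be routine. The only real obstacle is the measurability and well-definedness of $\mathcal{C}$. This requires that the descent map $A$ terminates deterministically, which holds because the line-search stationarity test uses a minimum relative step length. It also requires that failures be encoded as $+\infty$ so that $\mathcal{C}$ is a genuine extended-real random variable. I would state these as the hypotheses under which the theorem is read. I would also remark that the result concerns $L^{\ddagger}$, not the global optimum $\widetilde{J}^\star$. The two coincide only if the restart distribution gives positive mass to a neighbourhood of the best basin reachable through $\Pi$, consistent with Remark~1.
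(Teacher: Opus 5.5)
Your proof is correct and follows the paper's argument. You get the lower bound from $\mathbb{P}(\mathcal{C}<L^{\ddagger})=0$; then, for each $\varepsilon>0$, independence forces the positive-probability event $\{\mathcal{C}\le L^{\ddagger}+\varepsilon\}$ to occur eventually, and monotonicity of $M_n$ finishes. The differences are cosmetic: you use the geometric bound $(1-p_\varepsilon)^n\to 0$ (equivalently the first Borel--Cantelli lemma on $\{M_n>L^{\ddagger}+\varepsilon\}$) where the paper applies the second Borel--Cantelli lemma to $\{C_j\le L^{\ddagger}+\varepsilon\}$, and you spell out the countable unions and intersections, the measurability assumption, and the degenerate case $L^{\ddagger}=+\infty$, all of which the paper leaves tacit.
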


\begin{proof}
By definition of $L^{\ddagger}$, $\mathbb{P}(\mathcal{C}<L^{\ddagger})=0$, so almost surely every $\mathcal{C}(\boldsymbol{\tau}^{(j)}_0)\geq L^{\ddagger}$ and hence $M_n\geq L^{\ddagger}$ for all $n$. For any $\epsilon>0$, $p_\epsilon\triangleq\mathbb{P}(\mathcal{C}\leq L^{\ddagger}+\epsilon)>0$, and the events $\{\mathcal{C}(\boldsymbol{\tau}^{(j)}_0)\leq L^{\ddagger}+\epsilon\}$ are independent with common probability $p_\epsilon$; by the second Borel--Cantelli lemma one occurs almost surely, so $\limsup_n M_n\leq L^{\ddagger}+\epsilon$ almost surely. Since $M_n$ is non-increasing and $\epsilon$ arbitrary, $M_n\to L^{\ddagger}$ almost surely.
\end{proof}

\textit{Remark 2 (what the limit is, and is not).} $L^{\ddagger}$ is the global optimum of $V$ over the \emph{reachable set} of the composed operator, that is, over stationary points attainable by projected descent from initialisations that the projection accepts. If the basin of attraction of the global minimiser of $V$ over the principal chart has positive $D$-measure and the projection succeeds with positive probability inside it, then $L^{\ddagger}$ coincides with that global optimum. Empirically the projection's acceptance probability from cold draws decays as $\nu$ shrinks, and below $\nu\approx2.5^\circ$ it can vanish for some scenarios, shrinking the reachable set. The method converges to the best solution its projection can reach, not unconditionally to the global optimum of~\eqref{eq:outer}.

\textit{Remark 3 (decentralized speed-up).} With $N$ islands drawing independently from $D$, the first $n$ restarts complete in $\lceil n/N\rceil$ rounds of per-agent wall-clock, so Theorem~\ref{thm:multistart} holds along the swarm's aggregate sequence with an $N$-fold reduction in elapsed time. The warm-started island additionally guarantees the swarm never does worse than the deterministic single-start baseline.

\subsection{Consensus Round}
\label{sec:consensus}

After the search budget elapses, the swarm agrees on the best solution by a single max-consensus round on the pair $(1/L_i,\,i)$, where $L_i$ is island $i$'s best cost: at each synchronous iteration every agent replaces its pair by the maximum (in the first coordinate) over itself and its neighbours in the line-of-sight graph $\mathcal{G}$. Islands whose every start failed carry fitness $0$ and cannot win. Provided $\mathcal{G}$ is connected, all agents hold the global winner after at most $\mathrm{diam}(\mathcal{G})$ iterations, the winning agent broadcasts its torque trajectory along the same tree, and each agent recovers its own thrust sequence from one final inner solve of~\eqref{eq:inner}. Connectedness of $\mathcal{G}$ is assumed throughout.

\subsection{Closed-Loop Fault Recovery}
\label{sec:recovery}

The decentralized planner is embedded in a closed-loop architecture for in-flight faults. Each agent tracks the planned thrust sequence while monitoring two triggers: a trajectory-deviation monitor and a per-agent communication heartbeat with a timeout, under heterogeneous communication delays. When a trigger fires, the swarm executes a three-phase recovery: (i) \emph{detumble}, a cone-projected proportional-derivative braking phase that brings the payload to rest so that replanning starts from a well-defined boundary condition; (ii) \emph{identify}, in which unresponsive or misbehaving agents are flagged and removed from the active set; and (iii) \emph{replan}, in which the surviving agents run the decentralized multistart of Section~\ref{sec:multistart} from the current state to the original terminal condition, followed by the consensus round. If the survivors' attachment geometry no longer spans the required wrench space, the replanning problem is infeasible. The planner then reports the failure explicitly rather than returning an unattainable plan, and the episode is scored as a survivability failure. 

\section{NUMERICAL EXPERIMENTS}
\label{sec:experiments}

\subsection{Baseline Methods}
\label{sec:baselines}

The proposed method is compared against two sampling-based outer searches and a direct joint NLP. All baselines share the bi-level structure of Section~\ref{sec:reform}: the same inner solver, the same feasibility projection~\eqref{eq:proj}, and the same multiple-shooting warm start, so the comparison isolates the outer search strategy.

\subsubsection{Genetic Algorithm (GA)}
A population of $P$ flattened torque trajectories $\boldsymbol{\tau}\in\mathbb{R}^{3T}$ is evolved with two-point crossover, fixed-percentage mutation, and steady-state selection with elitism. Every candidate is projected via~\eqref{eq:proj} before evaluation, and fitness is the reciprocal inner cost. The GA is implemented with PyGAD~\cite{gad_pygad_2021}.

\subsubsection{Model Predictive Path Integral (C-MPPI / D-MPPI)}
Path-integral search~\cite{williams_information_2017} perturbs a nominal trajectory with $S$ Gaussian samples $\boldsymbol{\varepsilon}^{(s)}\sim\mathcal{N}(\mathbf{0},\sigma^2\mathbf{I})$, projects each candidate, $\hat{\boldsymbol{\tau}}^{(s)}=\Pi(\boldsymbol{\tau}^{(t)}+\boldsymbol{\varepsilon}^{(s)})$, scores it with the inner solver, and updates the nominal toward the weighted mean of the \emph{projected} candidates:
\begin{align}
  \boldsymbol{\tau}^{(t+1)} &= \boldsymbol{\tau}^{(t)}
  + \gamma\!\left(\sum_{s=1}^{S}\frac{w^{(s)}}{\sum_j w^{(j)}}\,\hat{\boldsymbol{\tau}}^{(s)}
  - \boldsymbol{\tau}^{(t)}\right),
  \label{eq:mppi_update}\\
  w^{(s)} &= \exp\!\left(-\frac{J^{(s)}-J_{\min}}{\lambda}\right),
  \notag
\end{align}
with temperature $\lambda>0$ and step size $\gamma\in(0,1]$. Weighting the projected candidates, rather than the raw perturbations, keeps the update target inside the set of points that were actually evaluated; with raw perturbations the scores and the update are computed at different points, which degrades convergence. The perturbation scale is set \emph{relative} to the nominal's RMS magnitude, mirroring Section~\ref{sec:pgd}. Failed inner solves carry cost $+\infty$ and zero weight, and the best-ever projected candidate is retained. The decentralised variant runs one MPPI island per agent from independent seeds, followed by the consensus round.

\subsubsection{Direct NLP}
The original problem~\eqref{eq:ocp} is also solved directly as a single joint NLP over $(\mathbf{U},\boldsymbol{x}_{0:T})$ with the same Tschauner--Hempel translational dynamics, body-frame thrusts, and Lie-group attitude kinematics as the bi-level pipeline, cold-started from the multiple-shooting nominal.

\subsection{Experimental Design}
\label{sec:design}

Scenarios are randomised by sampling orbital parameters, payload properties, terminal boundary conditions, agent count, and attachment vectors from uniform distributions over mission-representative ranges (semi-major axis $1.1$--$1.3\,R_\oplus$, eccentricity up to $0.30$, payload mass and principal inertia moments up to $5000$ in SI units, $3$--$10$ agents, and terminal states sampled uniformly over position and $SO(3)$). The payload begins at rest with zero rotation and must reach the sampled terminal position and attitude at rest. Two metrics are recorded per solver call: total fuel $J = \sum_{i,k}\|\boldsymbol{u}_{i,k}\|$ and terminal state violation
\begin{equation}
  \mathbf{V} = \|\boldsymbol{r}_T - \boldsymbol{r}_b\|
             + \|\boldsymbol{v}_T - \boldsymbol{v}_b\|
             + \bigl\|\log(R_T^\top R_b)\bigr\|
             + \|\boldsymbol{\omega}_T - \boldsymbol{\omega}_b\|,
  \label{eq:violation}
\end{equation}
where the attitude term is the geodesic distance on $SO(3)$. Every returned control sequence is re-simulated in a separate integrator, and the reported fuel and violation are those of the rollout, never the solver's own claim. A run is scored successful when the rollout satisfies $\mathbf{V}<10^{-3}$. Decentralised methods follow the per-agent budget convention. Each island receives the full wall-clock budget, reflecting hardware in which every agent is an independent computer, and islands are executed serially in simulation.

The baseline grid evaluates six methods (centralised and decentralised GD and MPPI, centralised GA, and the NLP) on $500$ randomised trials at every combination of cone half-angle $\nu\in\{1^\circ, 2^\circ, 2.5^\circ, 5^\circ, 15^\circ, 60^\circ\}$ and per-agent compute budget $\{10, 30, 60, 300, 600\}\,\mathrm{s}$, enforced as a hard CPU-time cap on every solver call; after removing duplicated scenario draws, all statistics are computed over the $472$ unique scenarios per angle. The same data support the decentralisation analysis, in which each decentralised method is compared with its centralised counterpart pairwise per scenario. The scalability study runs the closed-loop architecture of Section~\ref{sec:recovery} at fleet sizes $N\in\{4,11,18,25\}$ in two conditions: open loop (no faults injected) and closed loop (randomised agent-dropout faults), recording cost, terminal violation, recovery cycles, removed agents, and the replan-feasibility status that defines the fault-survivability statistics.

\section{RESULTS AND DISCUSSION}
\label{sec:results}

\begin{figure}[!t]
\centering
\includegraphics[width=0.92\columnwidth]{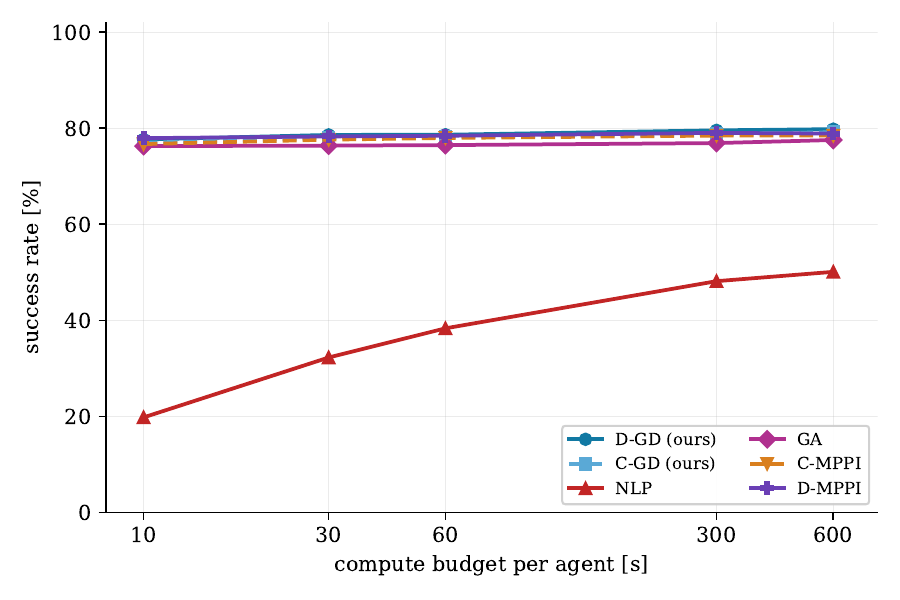}\\[2pt]
\includegraphics[width=0.92\columnwidth]{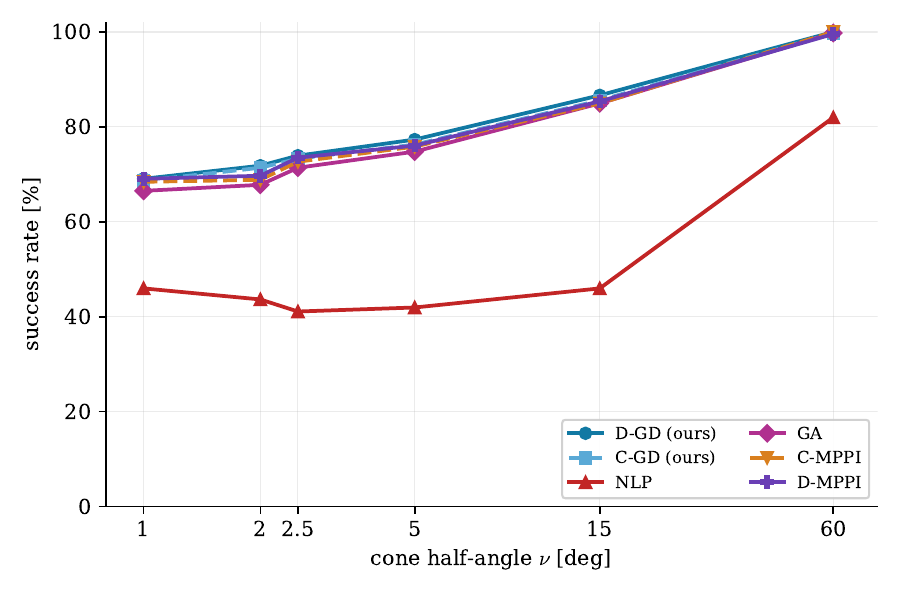}
\caption{Success rate, where a run succeeds when the independent verification rollout of the returned controls satisfies $\mathbf{V}<10^{-3}$. Top: versus per-agent compute budget, pooled over all cone half-angles ($2832$ scenarios per point). Bottom: versus cone half-angle at the $600\,\mathrm{s}$ budget ($472$ scenarios per point).}
\label{fig:success}
\end{figure}

\begin{figure}[!t]
\centering
\includegraphics[width=0.92\columnwidth]{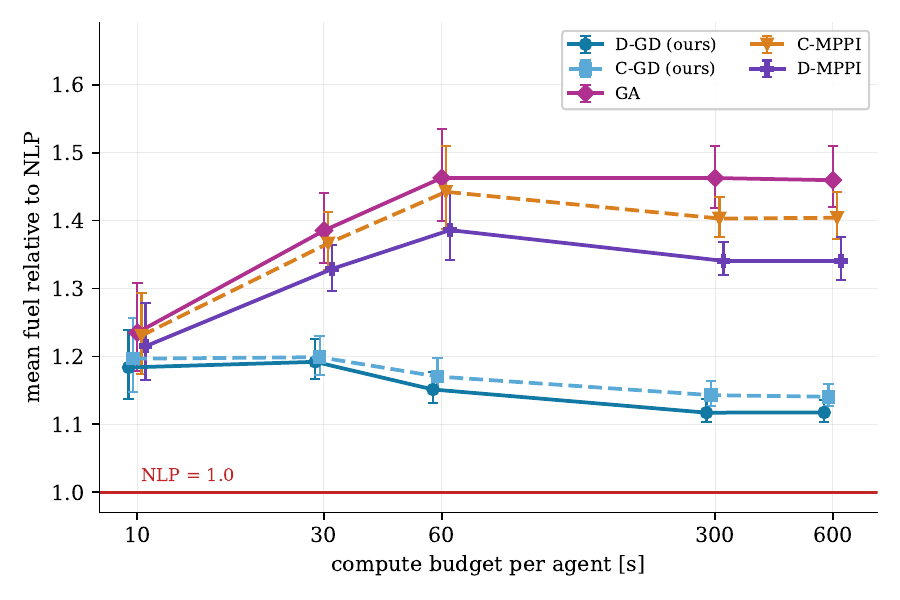}\\[2pt]
\includegraphics[width=0.92\columnwidth]{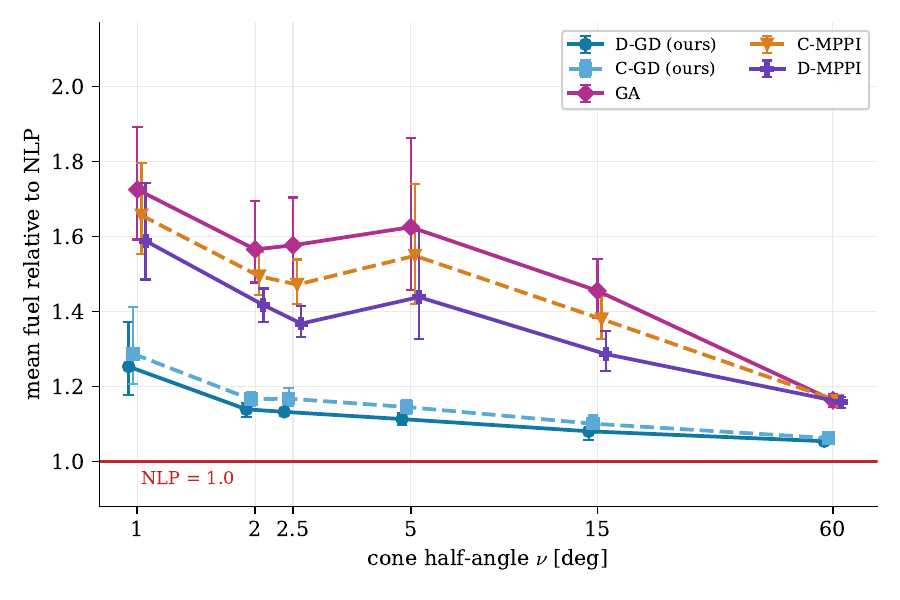}
\caption{Mean fuel relative to the NLP on jointly solved scenarios; each per-scenario ratio is paired before averaging, and error bars are bootstrap $95\%$ confidence intervals of the mean paired ratio. Top: versus per-agent compute budget, pooled over cone half-angles. Bottom: versus cone half-angle at the $600\,\mathrm{s}$ budget.}
\label{fig:fuel}
\end{figure}

\begin{table}[!t]
\centering
\caption{Success Rates of NLP and Proposed Method [NLP \%\,/\,D-GD \%].}
\label{tab:success_fact}
{\small
\setlength{\tabcolsep}{4pt}
\begin{tabular}{@{}lccccc@{}}
\toprule
$\nu$ & $10$\,s & $30$\,s & $60$\,s & $300$\,s & $600$\,s \\
\midrule
$1^\circ$   & 10.0\,/\,66.3 & 22.9\,/\,68.0 & 32.0\,/\,68.0 & 43.4\,/\,69.5 & 46.0\,/\,69.1 \\
$2^\circ$   & 12.1\,/\,67.8 & 22.7\,/\,69.7 & 29.4\,/\,70.1 & 40.7\,/\,71.4 & 43.6\,/\,71.8 \\
$2.5^\circ$ & 13.1\,/\,72.5 & 22.7\,/\,73.1 & 27.8\,/\,73.1 & 40.5\,/\,73.9 & 41.1\,/\,73.9 \\
$5^\circ$   & 12.5\,/\,73.9 & 23.7\,/\,75.0 & 28.8\,/\,75.0 & 39.4\,/\,76.1 & 41.9\,/\,77.3 \\
$15^\circ$  & 19.5\,/\,85.4 & 25.8\,/\,85.8 & 33.7\,/\,85.6 & 44.1\,/\,86.4 & 46.0\,/\,86.7 \\
$60^\circ$  & 51.7\,/\,100.0 & 75.8\,/\,100.0 & 78.4\,/\,100.0 & 80.9\,/\,100.0 & 82.0\,/\,100.0 \\
\bottomrule
\end{tabular}}
\end{table}

\begin{figure}[!t]
\centering
\includegraphics[width=0.92\columnwidth]{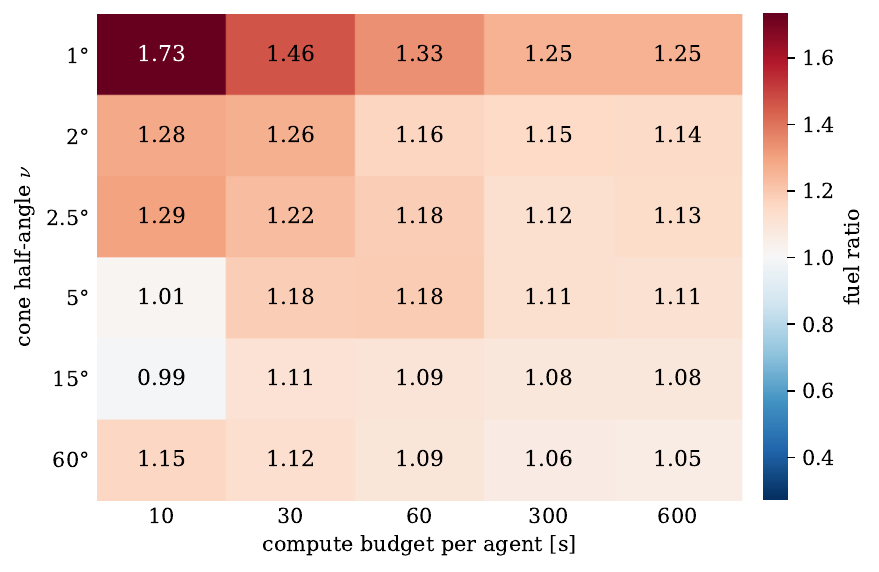}
\caption{Mean paired fuel ratio of D-GD relative to the NLP over the full angle--budget factorial, on jointly solved scenarios. Cells near or below one at $10\,\mathrm{s}$ occur where the NLP is truncated at the CPU cap, so its returned iterate is feasible but far from fuel-optimal.}
\label{fig:fuel_heatmap}
\end{figure}

\subsection{Baseline Comparison}
\label{sec:results_baseline}

Fig.~\ref{fig:success} reports reliability. The gradient-descent methods and the sampling baselines occupy a band far above the joint NLP at every budget. Pooled over angles, D-GD succeeds on $77.6$--$79.8\%$ of scenarios as the per-agent budget grows from $10$ to $600\,\mathrm{s}$, C-GD on $77.0$--$79.2\%$, and the samplers sit flat at $76$--$79\%$, while the NLP starts at $19.8\%$ and climbs to $50.1\%$. The gap closes as compute lengthens but remains large: at $10\,\mathrm{s}$ the NLP solves one quarter as many scenarios as D-GD, and at $600\,\mathrm{s}$ it still trails by $30$ percentage points. The angle sweep at $600\,\mathrm{s}$ shows the same separation at every cone width. The NLP reaches $41$--$46\%$ for $\nu\leq15^\circ$ and $82\%$ even at the benign $\nu=60^\circ$, against $69$--$87\%$ and $100\%$ for D-GD, which matches or exceeds every baseline in every angle cell and posts the highest success rate at every budget of $30\,\mathrm{s}$ and above (Table~\ref{tab:success_fact}). The NLP's dominant failure mode at short budgets is truncation, the CPU cap interrupts the joint solve mid-descent and the returned iterate, re-simulated, misses the terminal set. The bi-level methods degrade gracefully under the same cap because the inner allocation is solved exactly at every outer iterate, so an interrupted outer search still returns dynamically consistent, cone-feasible controls.

Fuel tells the complementary story on the scenarios the NLP does solve (Fig.~\ref{fig:fuel}). Paired per scenario against the NLP, the gradient methods are the closest of the outer searches at every budget. The mean D-GD ratio falls from $1.18$ at $10\,\mathrm{s}$ to $1.12$ at $600\,\mathrm{s}$, and C-GD from $1.20$ to $1.14$, whereas the samplers drift upward with budget, from $1.21$--$1.24$ at $10\,\mathrm{s}$ to $1.34$--$1.46$ at $600\,\mathrm{s}$. The takeaway is, added compute buys the samplers more successes but not better optima, while it buys the gradient methods both. The direct comparison between the NLP and the proposed method (Fig.~\ref{fig:fuel_heatmap}) illustrates the tradeoff more clearly. At tight cones under starved budgets, the proposed method costs significantly more fuel than the NLP baseline (up to $1.73$ at $\nu=1^\circ$, $10\,\mathrm{s}$), though it is several times more reliable. Both the reliability advantage of the proposed approach, and the fuel-use advantage of the NLP baseline, decay along both axes, with D-GD within $5$--$8\%$ of the converged NLP cost for $\nu\geq15^\circ$ at all budgets of $300\,\mathrm{s}$ and above. The near-parity cells at $10\,\mathrm{s}$ ($0.99$--$1.01$ at $\nu=5^\circ$--$15^\circ$) are the truncation effect in reverse. There the NLP's surviving iterates are feasible but expensive, and D-GD matches them outright.

\subsection{Value of Decentralisation}
\label{sec:results_decentralisation}

The decentralised multistart of Section~\ref{sec:multistart} gives a consistent, modest edge over the centralised variant. D-GD's mean paired fuel ratio matches or betters C-GD's at every budget, $1.18$ against $1.20$ at $10\,\mathrm{s}$ and $1.12$ against $1.14$ at $600\,\mathrm{s}$: the independent islands explore at least as many basins per second of wall-clock as a single centralised restart chain, and the two variants stay within $0.02$ of one another once either has explored the reachable basins. D-GD also posts the higher success rate at every budget (Fig.~\ref{fig:success}). Decentralisation therefore costs nothing in solution quality and is what enables the closed-loop recovery mode of Section~\ref{sec:recovery}.

\subsection{Scalability and Fault Survivability}
\label{sec:results_scalability}

\begin{figure}[!t]
\centering
\includegraphics[width=0.92\columnwidth]{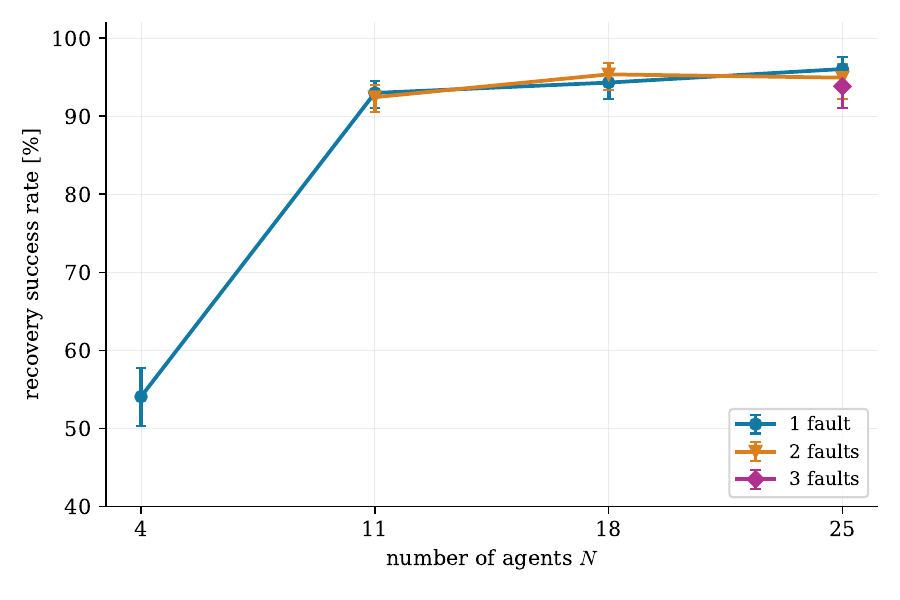}
\caption{Closed-loop fault recovery success versus fleet size under injected actuation faults ($700$--$1800$ episodes per fleet size; error bars are Wilson $95\%$ confidence intervals). For $N\geq11$ recovery is insensitive to whether one, two, or three agents fail simultaneously.}
\label{fig:recovery}
\end{figure}

Fig.~\ref{fig:recovery} reports the closed-loop architecture under randomised actuation faults that disable one to three agents, with the survivors detecting the fault, re-deriving boundary conditions in flight, and replanning decentrally. Recovery reliability improves with fleet size: $54\%$ of episodes recover at $N=4$, rising to $93\%$ at $N=11$ and $94$--$96\%$ at $N\in\{18,25\}$. For $N\geq11$ the rate is insensitive to the number of simultaneous faults, with two-fault episodes recovering at $92$--$95\%$ and three-fault episodes at $94\%$ where sampled ($N=25$). The four-agent floor is physical rather than algorithmic since losing one of four agents removes a quarter of the fleet's actuation authority, and the dominant failure is a wrench-infeasible replan in which the survivors' attachment cones no longer span the wrench required to brake and re-steer the payload. Median replan wall-clock is constant at roughly $29\,\mathrm{s}$ per agent across all fleet sizes under serial island simulation, so recovery latency is flat on distributed hardware. Fault tolerance is thus an emergent benefit of redundancy that the decentralised planner converts into recovered missions at scale.

\section{CONCLUSION}
\label{sec:conclusion}

This paper presented a decentralised bi-level framework for cooperative orbital transportation of a rigid payload. The problem is split into an inner convex thrust allocation and an outer search over payload torque trajectories, proven equivalent to the original problem, and the outer search is solved by projected gradient descent on the exact envelope-theorem gradient of the inner optimal value. Random restarts handle multimodality, each agent runs an independent restart stream, and a max-consensus round selects the swarm-wide best trajectory; almost-sure convergence to the optimum of the projector-reachable set was established, and the same planner drives a closed-loop fault-recovery mode. 

In Monte Carlo experiments across cone half-angles from $1^\circ$ to $60^\circ$ and per-agent budgets from $10$ to $600\,\mathrm{s}$, the method matched or exceeded every baseline's reliability at every cone angle, succeeding on $78$--$80\%$ of rollout-verified scenarios against $20$--$50\%$ for the direct NLP. It came closest to the converged NLP's fuel among all outer searches, within $5$--$8\%$ at moderate cone angles and generous budgets, where the sampling baselines paid $34$--$46\%$, it was never worse than its centralised counterpart on fuel or reliability at any budget, and it raised closed-loop fault survivability from $54\%$ at four agents to $93$--$96\%$ at eleven or more. The proposed approach effectively takes problems that are infeasible for NLP solvers and returns feasible, but more expensive, solutions for them, and the gap between the proposed approach and NLP solvers reduces as the problem becomes easier and compute budget increases.

More broadly, the results indicate that the exact sensitivity of a convex subproblem can turn a heuristic decomposition into a principled first-order method, and that decentralisation is most valuable not as a way to split computation but as a way to hedge against the multimodality of the underlying trajectory landscape. Both observations extend to other multi-agent space tasks, such as on-orbit assembly and servicing, in which a shared rigid body couples the agents' actuation. Future work will widen the projection's reachable set at tight cone angles, through a wider projection funnel or a homotopy over the cone angle, adapt the restart distribution online toward unexplored basins, and extend the closed-loop layer to time-varying communication topologies and to faults during the detumble phase.  Together, these steps move toward swarms that can be handed a large or non-cooperative object in orbit and be trusted to move it, fuel-efficiently and without a single point of failure, under the faults and constraints of real missions.

\section{Acknowledgements}
The authors would like to acknowledge the assistance given by Research IT and the use of the Computational Shared Facility at The University of Manchester. Large language model tools (Anthropic Claude models, Claude Opus 4.8 and Claude Fable 5) assisted with code development and manuscript editing; all methods, results, analyses, and conclusions were developed and verified by the authors, and no generative artificial intelligence was used to create any figure or image.

\bibliographystyle{IEEEtaes}
\bibliography{references}

\begin{thebibliography}{10}
\providecommand{\url}[1]{#1}
\csname url@samestyle\endcsname
\renewcommand{\newblock}{\par}
\providecommand{\bibinfo}[2]{#2}
\providecommand{\BIBentrySTDinterwordspacing}{\spaceskip=0pt\relax}
\providecommand{\BIBentryALTinterwordstretchfactor}{4}
\providecommand{\BIBentryALTinterwordspacing}{\spaceskip=\fontdimen2\font plus
\BIBentryALTinterwordstretchfactor\fontdimen3\font minus
  \fontdimen4\font\relax}
\providecommand{\BIBforeignlanguage}[2]{{%
\expandafter\ifx\csname l@#1\endcsname\relax
\typeout{** WARNING: IEEEtran.bst: No hyphenation pattern has been}%
\typeout{** loaded for the language `#1'. Using the pattern for}%
\typeout{** the default language instead.}%
\else
\language=\csname l@#1\endcsname
\fi
#2}}
\providecommand{\BIBdecl}{\relax}
\BIBdecl

\bibitem{wang_safe_2025}
K.~Wang, J.~Qiao, E.~Wang, J.~Lei, W.~Wang, and T.~Meng
\newblock  Safe inspection tracking control for a tumbling target under full
  state constraints and limited thrust \newblock  \emph{IEEE Transactions on
  Aerospace and Electronic Systems}, vol.~61, no.~3, pp. 6822--6838, 2025.

\bibitem{farivarnejad_multirobot_2022}
H.~Farivarnejad and S.~Berman
\newblock  Multirobot {Control} {Strategies} for {Collective} {Transport}
  \newblock  \emph{Annual Review of Control, Robotics, and Autonomous Systems},
  vol.~5, no.~1, pp. 205--219, 2022.

\bibitem{leitner_multi-robot_2009}
J.~Leitner
\newblock  Multi-robot {Cooperation} in {Space}: {A} {Survey} \newblock  In
  \emph{2009 {Advanced} {Technologies} for {Enhanced} {Quality} of {Life}},
  2009, pp. 144--151.

\bibitem{stuber_lets_2020}
J.~Stüber, C.~Zito, and R.~Stolkin
\newblock  Let's {Push} {Things} {Forward}: {A} {Survey} on {Robot} {Pushing}
  \newblock  \emph{Frontiers in Robotics and AI}, vol.~7, p.~8, 2020.

\bibitem{abeshtan_robust_2024}
P.~Abeshtan and F.~Saghafi
\newblock  Robust trajectory tracking for cooperative aerial load
  transportation \newblock  \emph{IEEE Transactions on Aerospace and Electronic
  Systems}, vol.~60, no.~6, pp. 8448--8457, 2024.

\bibitem{zhang_decentralized_2020}
L.~Zhang, Y.~Sun, A.~Barth, and O.~Ma
\newblock  Decentralized control of multi-robot system in cooperative object
  transportation using deep reinforcement learning \newblock  \emph{IEEE
  Access}, vol.~8, pp. 184\,109--184\,119, 2020.

\bibitem{muhammed_realtime_2026}
I.~Muhammed, A.~A. Nada, and H.~El-Hussieny
\newblock  Real-time decentralized model predictive control for cooperative
  multi-robot object transport: Experimental validation \newblock
  \emph{Scientific Reports}, vol.~16, no.~1, p. 9824, 2026.

\bibitem{farivarnejad_fully_2021}
H.~Farivarnejad, A.~S. Lafmejani, and S.~Berman
\newblock  Fully {Decentralized} {Controller} for {Multi}-{Robot} {Collective}
  {Transport} in {Space} {Applications} \newblock  In \emph{2021 {IEEE}
  {Aerospace} {Conference} (50100)}, 2021, pp. 1--9.

\bibitem{chen_occlusion-based_2015}
J.~Chen, M.~Gauci, W.~Li, A.~Kolling, and R.~Groß
\newblock  Occlusion-{Based} {Cooperative} {Transport} with a {Swarm} of
  {Miniature} {Mobile} {Robots} \newblock  \emph{IEEE Transactions on
  Robotics}, vol.~31, no.~2, pp. 307--321, 2015.

\bibitem{sucan_advantages_2011}
I.~A. Şucan and L.~E. Kavraki
\newblock  On the advantages of task motion multigraphs for efficient mobile
  manipulation \newblock  In \emph{2011 {IEEE}/{RSJ} {International}
  {Conference} on {Intelligent} {Robots} and {Systems}}, 2011, pp. 4621--4626.

\bibitem{xu_cooperative_2025}
T.~Xu, W.~Wang, H.~Chai, and Q.~Xu
\newblock  Cooperative three-satellite methodology for space target detumbling
  \newblock  \emph{Advances in Space Research}, vol.~75, no.~10, pp.
  7290--7305, 2025.

\bibitem{meng_decentralized_2018}
Y.~Meng, Q.~Chen, and A.~Rahmani
\newblock  A decentralized cooperative control scheme for a distributed space
  transportation system \newblock  \emph{Robotics and Autonomous Systems}, vol.
  101, pp. 1--19, 2018.

\bibitem{vance_evaluation_2020}
L.~D. Vance, E.~Asphaug, and J.~Thangavelautham
\newblock  Evaluation of {Mother}-{Daughter} {Architectures} for {Asteroid}
  {Belt} {Exploration} \newblock  2020, arXiv:1812.11243 [astro-ph].

\bibitem{underwood_using_2015}
C.~Underwood, S.~Pellegrino, V.~J. Lappas, C.~P. Bridges, and J.~Baker
\newblock  Using {CubeSat}/micro-satellite technology to demonstrate the
  {Autonomous} {Assembly} of a {Reconfigurable} {Space} {Telescope} ({AAReST})
  \newblock  \emph{Acta Astronautica}, vol. 114, pp. 112--122, 2015.

\bibitem{udrea_cooperative_2015}
B.~Udrea and M.~Nayak
\newblock  A cooperative multi-satellite mission for controlled active debris
  removal from low {Earth} orbit \newblock  In \emph{2015 {IEEE} {Aerospace}
  {Conference}}, 2015, pp. 1--15.

\bibitem{marchesini_multiagent_2026}
G.~Marchesini \emph{et~al.}
\newblock  Multi-agent transportation of free-flyers in microgravity via
  pushing interaction under human-in-the-loop control \newblock  2026,
  arXiv:2609.24376.

\bibitem{asri_introductory_2024}
E.~G. Asri and Z.~H. Zhu
\newblock  An introductory review of swarm technology for spacecraft on-orbit
  servicing \newblock  \emph{International Journal of Mechanical System
  Dynamics}, vol.~4, no.~1, pp. 3--21, 2024.

\bibitem{adams_swarm_2026}
C.~A. Adams, T.~Smith, A.~Woodard, B.~C. Kempa, J.~D. Frank, and E.~Van~Kints
\newblock  Multi-agent swarm state of the art report \newblock  National
  Aeronautics and Space Administration, Tech. Rep. NASA/TP-20260005089, Jun.
  2026.

\bibitem{ma_adaptive_2023}
Y.~Ma, B.~Jiang, J.~Wang, and J.~Gong
\newblock  Adaptive fault-tolerant formation control for heterogeneous
  {UAVs-UGVs} systems with multiple actuator faults \newblock  \emph{IEEE
  Transactions on Aerospace and Electronic Systems}, vol.~59, no.~5, pp.
  6705--6716, 2023.

\bibitem{santos_reconfigurable_2025}
W.~G. Santos, P.~Mason, E.~T. Stoneking, and B.~V. Sarli
\newblock  Reconfigurable guidance strategy for compensating actuator faults in
  spacecraft formation flying \newblock  \emph{Journal of Guidance, Control,
  and Dynamics}, vol.~48, no.~2, pp. 282--296, 2025.

\bibitem{azadmanesh_decoupling_2026}
M.~Azadmanesh, K.~Georgiev, S.~Kolev, and M.~Todorov
\newblock  Decoupling global and local faults in satellite swarms using
  smart-freeze adaptation and isolation-priority logic \newblock
  \emph{Aerospace}, vol.~13, no.~2, p. 176, 2026.

\bibitem{chang_decentralized_2026}
C.~Chang and M.~Mesbahi
\newblock  Decentralized multiagent trajectory planning via duality and
  biconvex optimization \newblock  In \emph{AIAA SCITECH 2026 Forum}, 2026,
  aIAA 2026-0322.

\bibitem{morgan_model_2014}
D.~Morgan, S.-J. Chung, and F.~Y. Hadaegh
\newblock  Model predictive control of swarms of spacecraft using sequential
  convex programming \newblock  \emph{Journal of Guidance, Control, and
  Dynamics}, vol.~37, no.~6, pp. 1725--1740, 2014.

\bibitem{cui_trajectory_2024}
B.~Cui, X.~Chen, R.~Chai, Y.~Xia, and H.-S. Shin
\newblock  Trajectory planning of spacecraft swarm reconfiguration using
  reachable set-based collision constraints \newblock  \emph{IEEE Transactions
  on Aerospace and Electronic Systems}, vol.~60, no.~5, pp. 6474--6487, 2024.

\bibitem{liu_hierarchical_2026}
C.~Liu, Y.~Luo, X.~Yue, and E.~Capello
\newblock  A hierarchical optimization approach for spacecraft swarm
  trajectories considering dynamic communication topologies \newblock
  \emph{Aerospace Science and Technology}, vol. 168, p. 111250, 2026.

\bibitem{sikka_orbit_2025}
S.~D. Sikka, Z.~Lu, A.~Rai, D.~DeLaurentis, and S.~Mou
\newblock  On {Orbit} {Object} {Transportation} {With} {Spacecraft} {Swarms}
  \newblock  In \emph{{AIAA} {SCITECH} 2025 {Forum}}, 2025.

\bibitem{danskin_theory_1967}
J.~M. Danskin
\newblock \emph{The Theory of Max-Min and its Application to Weapons Allocation
  Problems}. Springer-Verlag, 1967.

\bibitem{diamond_cvxpy_2016}
S.~Diamond and S.~Boyd
\newblock  {CVXPY}: A {Python}-embedded modeling language for convex
  optimization \newblock  \emph{Journal of Machine Learning Research}, vol.~17,
  no.~83, pp. 1--5, 2016.

\bibitem{goulart_clarabel_2024}
P.~J. Goulart and Y.~Chen
\newblock  Clarabel: An interior-point solver for conic programs with quadratic
  objectives \newblock  2024, arXiv:2405.12762.

\bibitem{gad_pygad_2021}
A.~F. Gad
\newblock  {PyGAD}: {An} {Intuitive} {Genetic} {Algorithm} {Python} {Library}
  \newblock  2021, arXiv:2106.06158 [cs].

\bibitem{williams_information_2017}
G.~Williams \emph{et~al.}
\newblock  Information theoretic {MPC} for model-based reinforcement learning
  \newblock  In \emph{2017 {IEEE} International Conference on Robotics and
  Automation ({ICRA})}, 2017, pp. 1714--1721.

\end{thebibliography}

\end{document}